\newtheorem{theorem}{Theorem}
\newtheorem{corollary}{Corollary}
\newtheorem{lemma}{Lemma}
\newtheorem{definition}{Definition}
\newcommand{\ignore}[1]{}
\newcommand{\F}{\mathbb{F}}
\newcommand{\N}{\mathbb{N}}
\newcommand{\C}{\mathbb{C}}
\newcommand{\bbI}{\mathbb{I}}
\newcommand{\calH}{\mathcal{H}}
\newcommand{\calL}{\mathcal{L}}
\newcommand{\calU}{\mathcal{U}}
\newcommand{\calP}{\mathcal{P}}
\newcommand{\Exp}{\mathop{\mathbb{E}}}
\renewcommand{\epsilon}{\varepsilon}
\renewcommand{\Tr}{\mathrm{Tr}}
\renewcommand{\tr}{\mathrm{tr}}
\newcommand{\bbP}{\mathbb{P}}
\begin{document}

\title{Lower Bounds on Pauli Manipulation Detection Codes}

\author{Keiya Ichikawa\thanks{Institute of Science Tokyo \ \texttt{ichike96@gmail.com}}
\and Kenji Yasunaga\thanks{Institute of Science Tokyo \ \texttt{yasunaga@comp.isct.ac.jp}}}

\date{\today}

\maketitle

\begin{abstract}
We present a lower bound for Pauli Manipulation Detection (PMD) codes,
  a class of quantum codes that detect every Pauli error with high probability.
  Our lower bound reveals the first trade-off between the error parameter and the coding rate.
  Specifically, we show that every $q$-ary PMD code of length $n$ and coding rate $R$ 
  must satisfy $R \leq 1 - \frac{2}{n}\log_q\left(\frac{1}{\epsilon}\right) + o(1)$, 
  where  $\epsilon$ is the error parameter.
\end{abstract}

\section{Introduction}

Pauli Manipulation Detection (PMD) codes were introduced by Bergamaschi~\cite{Ber24}
as a coding scheme that detects every Pauli error with high probability.
PMD codes can be seen as a quantum analogue of Algebraic Manipulation Detection (AMD) codes~\cite{CDFPW08},
which guarantee detection of every additive error without secret keys.
Bergamaschi~\cite{Ber24}
provided an explicit construction of PMD codes based on purity testing codes~\cite{BCGST02}
and demonstrated their applications in quantum error correction and tamper detection.
Specifically, he constructed approximate quantum erasure codes approaching the quantum Singleton (or non-cloning) bound
by combining PMD codes with list-decodable stabilizer codes.
Also, he gave a construction of quantum tamper-detection codes 
for qubit-wise channels using classical non-malleable codes~\cite{DPW18}.

AMD codes have been extensively studied since their introduction in~\cite{CDFPW08}.
They provide a keyless method for detecting additive tampering
and have become a fundamental building block in information-theoretic cryptography.
Applications include robust secret sharing~\cite{CDFPW08,CPS02,CDDFS15,Che19},
robust fuzzy extractors~\cite{DORS08,DKKRS12}, multiparty computation~\cite{BT07,GIPST14,GIW16},
and non-malleable codes~\cite{DPW18}.
Tight lower bounds on the adversary's success probability and on the tag length are known~\cite{CDFPW08,CFP13,PS16},
together with near-optimal constructions~\cite{CPX15,PS16,HP18,HP19}.
In contrast, no lower bounds were previously known for PMD codes.

The ability of PMD codes to detect every Pauli error places them
within the framework of \emph{approximate} quantum error-correcting codes~\cite{LNCY97,CGS05,BGG24,YYGL24},
which can outperform \emph{exact} quantum codes by allowing a small error probability.
In particular, PMD codes provide an approximate form of quantum error detection against Pauli errors.
This detection property also relates PMD codes to recent work on quantum non-malleable codes~\cite{ABJ24,BGJR25},
and more broadly to quantum tamper-detection schemes~\cite{BK23,BB25,BKR25}.
While PMD codes focus on detecting Pauli errors,
quantum tamper detection aims to provide protection against more general classes of operations,
including broad classes of unitary attacks~\cite{BK23,BB25,BKR25}.

In this work, we present the first lower bound for PMD codes.
A $q^k$-dimensional subspace $\Pi$ of $\C^{q^n}$
is said to be an $(n, k, \epsilon)_q$-PMD code if $\| \Pi E \Pi \|_\infty \leq \epsilon$ for 
every non-identity Pauli error $E$ (see Definition~\ref{def:pmd}).
We show that every $(n, n-\lambda, \epsilon)_q$-PMD code satisfies
$\epsilon \geq \sqrt{(q^{2n - \lambda}-1)/(q^{2n} - 1)}$.
This bound reveals a trade-off between the error parameter $\epsilon$ and the coding rate $R = k/n$.
In particular, for any constant $\epsilon \in (0,1)$, every $q$-ary PMD code of length $n$, rate $R$, and error parameter $\epsilon$ must satisfy
$R \leq 1 - \frac{2}{n}\log_q\bigl(\frac{1}{\epsilon}\bigr) + o(1)$.
Compared to the parameters achieved by the $(n+\ell, n-\ell, \epsilon)_q$-PMD code of~\cite{Ber24},
which achieves $\epsilon \leq \sqrt{ (2n+1)q^{-\ell}}$,
our lower bound implies that any such code must have redundancy at least $\ell - O(\log_q n)$.
Since the redundancy of the construction in~\cite{Ber24} is $2\ell$,
this leaves a gap of $\ell + O(\log_q n)$ between our lower bound and the upper bound.
Our proof exploits the fact that the Pauli operators form 
 a unitary $1$-design, allowing us to analyze the average behavior of Pauli errors
 as if they were drawn from the entire unitary group.

\section{Preliminaries}

\subsection{Quantum States and Distances}

Let $\calL(\calH)$ be the set of linear operators on a finite Hilbert space $\calH$.
Let $A$ be a linear operator in $\calL(\calH)$.
Then, $A$ is said to be unitary if $A^\dagger A = A A^\dagger = \bbI$. 
We denote by $\calU(\calH)$ the set of all unitary operators $U \in \calL(\calH)$,
which is called the unitary group.
An operator $A$ is said to be Hermitian if $A^\dagger = A$.
A projection operator is a Hermitian operator $A$ such that $A^2 = A$.
The  trace of $A \in \calL(\calH)$ is defined as 
$\Tr(A) = \sum_{i =1}^d \bra{\vb*{e}_i} A \ket{\vb*{e}_i}$,
where $\ket{\vb*{e}_1}, \dots, \ket{\vb*{e}_d}$ form an orthonormal basis of $\calH$. 
The trace has the \emph{cyclic property} of being invariant under circular shifts; 
$\Tr(ABCD) = \Tr(BCDA) = \Tr(CDAB) = \Tr(DABC)$.
An operator $A \in \calL(\calH)$ is positive semi-definite if
$\bra{\psi} A \ket{\psi} \geq 0$ for any $\ket{\psi} \in \calH$.
A quantum state $\rho \in \calL(\calH)$ is a linear operator that is positive semi-definite and trace $1$.
We use the Schatten norms for quantifying the distances between quantum states.
The operator (or infinity) norm is $\| M \|_\infty = \max_{\ket{\psi}}\left| \bra{\psi}M^\dagger M \ket{\psi}\right|^{1/2}$,
where the maximum is taken over all unit vectors $\ket{\psi} \in \calH$.

\subsection{$q$-ary Pauli Operators}

Let $\F_q$ be a finite field of $q = p^m$ elements for a prime $p$.
The field trace is a function $\tr_{\F_q/\F_p} : \F_q \to \F_p$ such that 
$\tr_{\F_q/\F_p}(a) = \sum_{i=0}^{m-1} a^{p^i}$.
The set of elements $\{\alpha_1, \dots, \alpha_m\}$ is a basis of $\F_q$ over $\F_p$
if every $a \in \F_q$ can be expressed uniquely as $a = \sum_{i=1}^m a_i \alpha_i$, where $a_i \in \F_p$. 
A pair of bases $\alpha = \{\alpha_1, \dots, \alpha_m\}$ and $\beta = \{\beta_1, \dots, \beta_m\}$
are said to be dual bases if $\tr_{\F_q/\F_p}(\alpha_i \beta_j) = \delta_{ij}$ for every $i, j \in [m]= \{1, \dots, m\}$,
where $\delta_{ij} = 1$ if $i = j$, and $\delta_{ij}=0$ otherwise.
When $a, b \in \F_q$ are expressed as
$(a_1, \dots, a_m)$ and $(b_1, \dots, b_m)$ in the dual bases $\alpha$ and $\beta$, respectively,
the inner product becomes the field trace;
\[ \langle a, b \rangle = \sum_{i=1}^m a_ib_i 
= \sum_{i=1}^m\sum_{j=1}^m a_ib_j \tr_{\F_q/\F_p}(\alpha_i \beta_j) = \tr_{\F_q/\F_p}(ab).
\]

We define the shift operator $T$ and the phase operator $R$ over $\C^p$ as
\[ T = \sum_{x \in \F_p} \ket{x+1}\bra{x}  \text{  and  } R = \sum_{x \in \F_p} \omega^x \ket{x}\bra{x},\]
where  $\omega = e^{2\pi i/p}$.
The operators $T^iR^j$ for $i, j \in \F_p$ are said to be the Weyl-Heisenberg operators
and form an orthonormal basis of operators over $\C^p$.
If $a, b \in \F_q$ are expressed as $(a_1, \dots, a_m)$ and $(b_1, \dots, b_m)$ in the dual bases $\alpha$ and $\beta$, respectively,
we can define a basis of operators over $\C^q$ by
\[ E_{a,b} = X^aZ^b = \bigotimes_{i \in [m]}T^{a_i}R^{b_i},\]
where $\otimes$ is the tensor product.
Then, we have $E_{a,b}E_{a',b'} = \omega^{\langle a, b'\rangle - \langle a',b \rangle}E_{a',b'}E_{a,b}$.
For $\vb{a} = (a^{(1)}, \dots, a^{(n)}), \vb{b} = (b^{(1)}, \dots, b^{(n)}) \in \F_q^n$,
we can define operators on $\C^{q^n}$ by
${E}_{\vb{a},\vb{b}} = \bigotimes_{j \in [n]}E_{a^{(j)}, b^{(j)}}$.
The set of $n$ qudit Pauli operators $\bbP^n_q$ is 
$\{ {E}_{\vb{a},\vb{b}} : \vb{a}, \vb{b} \in \F_q^n \}$,
and the $n$ qudit Pauli group $\calP_q^n$ is the group generated by ${E}_{\vb{a},\vb{b}}$
and $\omega^{1/2} \cdot \bbI_{q^n \times  q^n}$.

\subsection{Haar Measure and Unitary Designs}

For a unitary group $\calU(\C^d)$ for $d \geq 1$,
the Haar measure on $\calU(\C^d)$ is the unique probability measure $\mu_H$
such that for every integrable function $f$ and every $V \in \calU(\C^d)$,
\[ \int_{\calU(\C^d)} f(U)d\mu_H(U) = \int_{\calU(\C^d)} f(UV)d\mu_H(U) = \int_{\calU(\C^d)} f(VU)d\mu_H(U).\]
Since it is a probability measure, $\int_S d\mu_H(U) \geq 0$ 
for any $S \subseteq \calU(\C^d)$ and $\int_{\calU(\C^d)} d\mu_H(U) =1$.
The expected value of $f(U)$ on $\mu_H$ is 
\[ \Exp_{U \sim \mu_H}[f(U)] = \int_{\calU(\C^d)} f(U)d\mu_H(U).\]

A probability distribution $\nu$ over $\calU(\C^d)$ is called 
a \emph{unitary $k$-design} if for every $O \in \calL(\C^{{d}^k})$, it holds that
\[ \Exp_{V \sim \nu}\left[V^{\otimes k} O {V^\dagger}^{\otimes k}\right] 
= \Exp_{U \sim \mu_H}\left[U^{\otimes k} O {U^\dagger}^{\otimes k}\right] 
.\]
If $\nu$ is the uniform distribution over a finite set $S \subseteq \calU(\C^{d})$, 
the left-hand side  is equivalent to
\[ \frac{1}{|S|} \sum_{V \in S}V^{\otimes k} O {V^\dagger}^{\otimes k}.\]
Intuitively, a unitary design is a distribution over unitaries
whose moments up to order $k$ match those of the Haar measure. 
In particular, for $k=1$, the Haar average has a simple closed form.
\begin{lemma}\cite[Corollary~13]{Mele24}\label{lem:moment}
For every $O \in \calL(\C^d)$, it holds that
\[ \Exp_{U \sim \mu_H}\left[U O {U^\dagger}\right] = \frac{\Tr(O)}{d}\bbI_{d \times d}.\]
\end{lemma}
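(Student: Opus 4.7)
The plan is to use the left-invariance of the Haar measure to show that the twirl $\Phi(O) \eqdef \Exp_{U \sim \mu_H}[U O U^\dagger]$ commutes with every unitary, then conclude by a Schur-type symmetry argument that $\Phi(O)$ must be a scalar multiple of the identity, and finally pin down the scalar by a trace computation.

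First, I would fix an arbitrary $V \in \calU(\C^d)$ and apply the left-invariance formula stated in the excerpt to the integrand $U \mapsto (VU) O (VU)^\dagger$:
\[ V \Phi(O) V^\dagger = \Exp_{U \sim \mu_H}\left[(VU) O (VU)^\dagger\right] = \Exp_{U \sim \mu_H}\left[U O U^\dagger\right] = \Phi(O). \]
Rearranging gives $V \Phi(O) = \Phi(O) V$ for every $V \in \calU(\C^d)$, so $\Phi(O)$ commutes with every unitary.

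Next, I would invoke the standard fact that any operator on $\C^d$ commuting with every unitary is a scalar multiple of $\bbI_{d \times d}$. A self-contained argument runs as follows: since $\C^d$ is finite-dimensional, $\Phi(O)$ admits a unit eigenvector $\ket{v}$ with some eigenvalue $c$; for any other unit vector $\ket{w}$, pick a unitary $V$ with $V \ket{v} = \ket{w}$, and observe that
\[ \Phi(O) \ket{w} = \Phi(O) V \ket{v} = V \Phi(O) \ket{v} = c V \ket{v} = c \ket{w}. \]
Since the unitaries act transitively on the unit sphere, every unit vector (and hence by homogeneity every vector) is an eigenvector of $\Phi(O)$ with eigenvalue $c$, so $\Phi(O) = c \bbI_{d \times d}$.

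Finally, taking traces of both sides and using cyclicity together with linearity of expectation,
\[ c \cdot d = \Tr(\Phi(O)) = \Exp_{U \sim \mu_H}\left[\Tr(U O U^\dagger)\right] = \Exp_{U \sim \mu_H}\left[\Tr(O)\right] = \Tr(O), \]
so $c = \Tr(O)/d$, which yields the claimed identity. The main obstacle is the Schur-type middle step: turning ``commutes with every unitary'' into ``scalar multiple of the identity.'' Once that is granted, the remainder is bookkeeping with the invariance property of $\mu_H$ and the cyclic property of the trace.
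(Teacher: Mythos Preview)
Your argument is correct and is the standard proof: left-invariance of $\mu_H$ gives $V\Phi(O)V^\dagger=\Phi(O)$ for all unitaries $V$, the transitivity-on-the-unit-sphere argument yields $\Phi(O)=c\,\bbI$, and the trace pins down $c=\Tr(O)/d$. There is nothing to compare against, however, because the paper does not supply its own proof of this lemma; it simply imports the statement from \cite[Corollary~13]{Mele24} and uses it as a black box en route to Lemma~\ref{lem:unitary_design}.
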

It is well known that the uniform distribution over the Pauli operators $\bbP_q^n$ forms a unitary $1$-design,
leading to the next lemma, which will be used in our proof.
\begin{lemma}\label{lem:unitary_design}
    For every $O \in \calL(\C^{q^n})$,
\[
\frac{1}{\left| \bbP_q^n\right|} \sum_{E \in \bbP_q^n}E O E^\dagger = \frac{\Tr(O)}{q^n} \bbI_{q^n \times q^n}.
\]
\end{lemma}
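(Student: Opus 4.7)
The plan is to deduce the lemma directly from the two ingredients the excerpt has already assembled: the unitary $1$-design property of $\bbP_q^n$ and the closed form for the Haar first moment in Lemma~\ref{lem:moment}. The argument is a one-line specialization: applying the definition of a unitary $1$-design with $k = 1$ to the operator $O \in \calL(\C^{q^n})$ replaces the finite average on the left-hand side by $\Exp_{U \sim \mu_H}[U O U^\dagger]$ taken over $\calU(\C^{q^n})$, and Lemma~\ref{lem:moment} with $d = q^n$ immediately evaluates this Haar expectation as $(\Tr(O)/q^n)\,\bbI_{q^n \times q^n}$. Note that $|\bbP_q^n| = q^{2n}$ is the correct normalization factor that matches the uniform distribution $\nu_{\bbP_q^n}$ used in the $1$-design definition.

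The only step not already spelled out in the preliminaries is the assertion that $\bbP_q^n$ forms a unitary $1$-design, which the paper labels as well known. To make the lemma self-contained, I would sketch a direct verification via orthogonality. Expand $O$ in the Pauli basis as $O = \sum_{\vb{c},\vb{d}} \alpha_{\vb{c},\vb{d}} E_{\vb{c},\vb{d}}$, using that the Weyl--Heisenberg-based Paulis are an orthogonal basis satisfying $\Tr(E_{\vb{c}',\vb{d}'}^\dagger E_{\vb{c},\vb{d}}) = q^n \delta_{(\vb{c},\vb{d}),(\vb{c}',\vb{d}')}$, so in particular $\alpha_{\vb{0},\vb{0}} = \Tr(O)/q^n$. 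The commutation relation stated in the excerpt lifts tensor-factor by tensor-factor to $n$ qudits as
\[
E_{\vb{a},\vb{b}}\, E_{\vb{c},\vb{d}}\, E_{\vb{a},\vb{b}}^\dagger \;=\; \omega^{\langle \vb{a}, \vb{d}\rangle - \langle \vb{c}, \vb{b}\rangle}\, E_{\vb{c},\vb{d}},
\]
so conjugation by a Pauli multiplies each basis element by a character in $(\vb{a},\vb{b})$.

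Averaging over $(\vb{a},\vb{b}) \in \F_q^n \times \F_q^n$ then turns the coefficient of $E_{\vb{c},\vb{d}}$ into $\alpha_{\vb{c},\vb{d}}$ times the factored character sum
\[
\frac{1}{q^{2n}}\Bigl(\sum_{\vb{a}} \omega^{\langle \vb{a}, \vb{d}\rangle}\Bigr)\Bigl(\sum_{\vb{b}} \omega^{-\langle \vb{c}, \vb{b}\rangle}\Bigr),
\]
which vanishes whenever $(\vb{c},\vb{d}) \neq (\vb{0},\vb{0})$ by nondegeneracy of the field-trace pairing $\langle \cdot, \cdot \rangle$ developed in Section~2.2. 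Only the $E_{\vb{0},\vb{0}} = \bbI$ term survives, and its coefficient is precisely $\alpha_{\vb{0},\vb{0}} = \Tr(O)/q^n$, matching the claim.

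The main (and essentially only) obstacle is that character-sum step; it is routine once one recalls that $\langle \cdot, \cdot \rangle$ arises from the field trace $\tr_{\F_q/\F_p}$, which is a surjective $\F_p$-linear map, so for any nonzero $\vb{c}$ the map $\vb{b} \mapsto \omega^{-\langle \vb{c}, \vb{b}\rangle}$ is a nontrivial additive character on $\F_q^n$ and sums to zero. No subtlety beyond this is needed; the lemma is a direct packaging of the $1$-design property for the Pauli group together with Lemma~\ref{lem:moment}.
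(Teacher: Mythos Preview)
Your proposal is correct and matches the paper's approach exactly: the paper does not give a standalone proof but simply states that the Pauli operators form a unitary $1$-design and that this, together with Lemma~\ref{lem:moment}, yields the result. Your additional character-sum verification of the $1$-design property is sound and goes beyond what the paper supplies, since the paper leaves that fact as ``well known.''
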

It is known that every finite set 
$S \subseteq \calU(\C^d)$ whose uniform distribution forms a unitary $1$-design
satisfies $|S| \geq d^2$~\cite{RS09}.
Since $|\bbP_q^n| = q^{2n}$, the Pauli operators $\bbP_q^n$ 
give an example of a unitary $1$-design with minimum possible support size.

\section{PMD Codes and Their Lower Bounds}

A Pauli manipulation detection (PMD) code is defined as follows.
\begin{definition}\label{def:pmd}
A projection operator $\Pi$ on a $q^k$-dimensional subspace of $\C^{q^n}$ is
said to be an \emph{$(n,k,\epsilon)_q$-PMD} code if for every non-trivial
Pauli operator $E \in \calP_q^n \setminus \{ \bbI_{q^n \times q^n} \}$,
\[ \| \Pi E \Pi \|_\infty \leq \epsilon.\]
\end{definition}
We also denote by $\Pi$  the code space defined by the projection $\Pi$.
With this definition, we can see that any code state $\ket{\psi_1}$ corrupted
by a non-trivial Pauli operator $E$ is almost orthogonal to the code space.
Namely, for any code state $\ket{\psi_2} \in \Pi$,
\begin{align*}
  |\bra{\psi_2} E\ket{\psi_1}| = |\bra{\psi_2}\Pi E \Pi\ket{\psi_1}|
   \le \|\Pi E \Pi\|_\infty
   \le \epsilon.
\end{align*}

We prove a lower bound on $\epsilon$ for any PMD code. 

\begin{theorem}\label{thm:lowerbound}
    Let $\Pi$ be an $(n, n-\lambda,\epsilon)_q$-PMD code. 
    Then, it holds that
    \[ \epsilon \geq \sqrt{\frac{q^{2n - \lambda}-1}{q^{2n}-1}}.\]
\end{theorem}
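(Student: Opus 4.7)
The plan is to leverage Lemma~\ref{lem:unitary_design} to compute an exact averaged quantity $\sum_{E \in \bbP_q^n} \| \Pi E \Pi \|_2^2$, and then convert the Schatten-$2$ norm bound into an operator-norm bound via the rank of $\Pi$. Concretely, since $\Pi$ projects onto a $q^{n-\lambda}$-dimensional space, $\Tr(\Pi) = q^{n-\lambda}$. Applying Lemma~\ref{lem:unitary_design} with $O = \Pi$ and multiplying by $|\bbP_q^n| = q^{2n}$ gives
\[
\sum_{E \in \bbP_q^n} E \Pi E^\dagger \;=\; q^{2n-\lambda}\, \bbI_{q^n \times q^n}.
\]

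Next I would left-multiply by $\Pi$, take traces, and use the cyclic property. For each $E$,
\[
\Tr(\Pi E \Pi E^\dagger) \;=\; \Tr\bigl( (\Pi E \Pi)(\Pi E \Pi)^\dagger\bigr) \;=\; \| \Pi E \Pi \|_2^2,
\]
where $\|\cdot\|_2$ is the Schatten-$2$ (Hilbert--Schmidt) norm; the first equality uses $\Pi^2 = \Pi$ and $E E^\dagger = \bbI$. Summing over $E \in \bbP_q^n$ yields
\[
\sum_{E \in \bbP_q^n} \| \Pi E \Pi \|_2^2 \;=\; \Tr\!\left(\Pi\cdot q^{2n-\lambda}\, \bbI\right) \;=\; q^{2n-\lambda} \cdot q^{n-\lambda} \;=\; q^{3n-2\lambda}.
\]

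Now I would separate the $E = \bbI$ contribution, for which $\|\Pi\|_2^2 = \Tr(\Pi) = q^{n-\lambda}$, and bound each remaining term. Since $\Pi E \Pi$ has rank at most $\mathrm{rank}(\Pi) = q^{n-\lambda}$, the elementary inequality between the Schatten-$2$ and operator norms gives $\| \Pi E \Pi \|_2^2 \leq q^{n-\lambda}\, \| \Pi E \Pi \|_\infty^2 \leq q^{n-\lambda}\epsilon^2$ for every non-trivial $E$. Combining,
\[
q^{3n-2\lambda} - q^{n-\lambda} \;=\; \sum_{E \neq \bbI} \| \Pi E \Pi \|_2^2 \;\leq\; (q^{2n}-1)\, q^{n-\lambda}\, \epsilon^2.
\]
Dividing by $q^{n-\lambda}$ and rearranging gives $\epsilon^2 \geq (q^{2n-\lambda}-1)/(q^{2n}-1)$, which is the claim.

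The only delicate step is the passage from the Schatten-$2$ norm to the operator norm: one must notice that although $E$ is full rank, the double compression $\Pi E \Pi$ inherits rank at most $q^{n-\lambda}$ from $\Pi$, which is precisely the factor that controls the loss. The rest is a direct computation using the $1$-design identity and the fact that the PMD hypothesis applies uniformly to all $q^{2n}-1$ non-trivial Paulis, which is exactly the averaging set produced by Lemma~\ref{lem:unitary_design}. Note also that the definition sums over $\bbP_q^n$ rather than the full group $\calP_q^n$, which is consistent since PMD codes are insensitive to overall phases in $\omega^{1/2}\cdot\bbI$.
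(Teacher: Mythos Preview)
Your proof is correct and reaches the same bound as the paper, using the same key ingredient (the $1$-design identity of Lemma~\ref{lem:unitary_design}), but via a somewhat different route. The paper fixes a state $\ket{\psi}$ and computes $\max_{\ket{\psi}}\Exp_{E}\bigl|\bra{\psi}\Pi E^\dagger \Pi E \Pi\ket{\psi}\bigr|$ exactly (obtaining $q^{-\lambda}$), then upper-bounds it by swapping the $\max$ and the average to produce $q^{-2n}\sum_E \|\Pi E \Pi\|_\infty^2$, which is controlled by the PMD hypothesis. You instead work globally with the Hilbert--Schmidt norm, compute $\sum_E \|\Pi E \Pi\|_2^2 = q^{3n-2\lambda}$ by a pure trace calculation, and pass to the operator norm via the rank inequality $\|A\|_2^2 \le \mathrm{rank}(A)\,\|A\|_\infty^2$. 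Your argument is slightly more streamlined in that it avoids introducing a test state and the $\max$/average swap; the paper's version, on the other hand, stays with the operator norm throughout and ties directly to the operational quantity $\|\Pi E \Pi\|_\infty$. The two are really the same computation read through different norms: summing the paper's state-dependent identity over an orthonormal basis of the code space recovers your trace identity, and the extra factor $q^{n-\lambda}$ that appears is exactly the rank factor you later divide out.
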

\begin{proof}
We consider the following value to derive our bound:
\begin{equation}
\max_{\ket{\psi}} \Exp_{E \in \bbP_q^n}\left| \bra{\psi}\Pi E^\dagger \Pi E \Pi \ket{\psi} \right|,\label{eq:maxrandom}
\end{equation}
where the maximum is taken over all unit vectors $\ket{\psi} \in \C^{q^n}$.
First, we evaluate (\ref{eq:maxrandom}) as follows:
\begin{align}
 \max_{\ket{\psi}} \Exp_{E \in \bbP_q^n}\left| \bra{\psi}\Pi E^\dagger \Pi E \Pi \ket{\psi} \right|
& = \max_{\ket{\psi}} \Exp_{E \in \bbP_q^n} \Tr\left(\bra{\psi}\Pi E^\dagger \Pi E \Pi \ket{\psi} \right) \label{eq:1} \\ 
& = \max_{\ket{\psi}}\frac{1}{|\bbP_q^n|} \sum_{E \in \bbP_q^n}\Tr\left(\bra{\psi}\Pi E^\dagger \Pi E \Pi \ket{\psi} \right)  \nonumber\\
& = \max_{\ket{\psi}}\frac{1}{|\bbP_q^n|} \sum_{E \in \bbP_q^n}\Tr\left(\Pi E \Pi \ket{\psi} \bra{\psi}\Pi E^\dagger  \right) & \because \text{The cyclic property} \nonumber\\
& = \max_{\ket{\psi}} \Tr\left( \Pi \frac{1}{|\bbP_q^n|} \sum_{E \in \bbP_q^n}\left(E \Pi \ket{\psi} \bra{\psi}\Pi E^\dagger  \right)\right) & \because \text{The linearity} \nonumber\\
& = \max_{\ket{\psi}} \Tr\left( \Pi \frac{\Tr\left(\Pi \ket{\psi} \bra{\psi}\Pi \right)}{q^n}\right)& \because \text{Lemma~\ref{lem:unitary_design}} \nonumber\\
& = \max_{\ket{\psi}} \frac{1}{q^n} \Tr\left(\Pi \ket{\psi} \bra{\psi}\Pi \right) \Tr(\Pi)  \nonumber\\
& = q^{-\lambda}. & \because \Tr(\Pi) = q^{n-\lambda} \label{eq:2}
\end{align}
where 
(\ref{eq:1}) follows from the fact that the inner products take non-negative values
and that $a = \Tr(a)$ for $a \geq 0$.
Next, we derive  an upper bound on (\ref{eq:maxrandom}) using that $\Pi$ is  an $(n, n-\lambda, \epsilon)_q$-PMD:
\begin{align}
    \max_{\ket{\psi}} \Exp_{E \in \bbP_q^n}\left| \bra{\psi}\Pi E^\dagger \Pi E \Pi \ket{\psi} \right| 
    & = \max_{\ket{\psi}} \frac{1}{|\bbP_q^n|} \sum_{E \in \bbP_q^n}\left| \bra{\psi}\Pi E^\dagger \Pi E \Pi \ket{\psi} \right|\nonumber\\
    & \leq \frac{1}{|\bbP_q^n|} \sum_{E \in \bbP_q^n}\max_{\ket{\psi}}\left| \bra{\psi}\Pi E^\dagger \Pi E \Pi \ket{\psi} \right|\nonumber\\
    & = \frac{1}{|\bbP_q^n|} \sum_{E \in \bbP_q^n} \| \Pi E \Pi \|^2_\infty\nonumber \\
    & \leq \frac{1}{|\bbP_q^n|} \left(1 + (|\bbP_q^n| -1)\epsilon^2 \right) \label{eq:3} \\
    & = \frac{1}{q^{2n}} \left(1 + (q^{2n} -1)\epsilon^2 \right),\label{eq:4}
\end{align}
where (\ref{eq:3}) follows from the fact that $\| \Pi E \Pi \|_\infty \leq \epsilon$ 
for every $E \in \bbP_q^n \setminus \{ \bbI_{q^n \times q^n} \}$.
The statement follows from  (\ref{eq:2}) and (\ref{eq:4}).
\end{proof}
As a corollary, we obtain a lower bound on the parameter $\lambda$ using $\epsilon$, $q$, and $n$.
\begin{corollary}\label{cor:lowerbound}
    For every $(n, n-\lambda,\epsilon)_q$-PMD code, 
    it holds that
    \[ \lambda \geq 2 \log_q\left( \frac{1}{\epsilon}\right) - \frac{1-\epsilon^2}{\epsilon^2 q^{2n}\ln q}.\]
\end{corollary}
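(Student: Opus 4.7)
The plan is to extract the stated bound directly from the inequality
\[ \epsilon \geq \sqrt{\frac{q^{2n-\lambda}-1}{q^{2n}-1}}\]
established in Theorem~\ref{thm:lowerbound}, which after squaring gives $\epsilon^2 (q^{2n}-1) \geq q^{2n-\lambda}-1$. The target inequality $\lambda \geq 2\log_q(1/\epsilon) - \log_q 2$ is equivalent to $q^{-\lambda} \leq 2\epsilon^2$, so the whole proof boils down to massaging the inequality above into this form with two loose estimates.

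First, I would relax the denominator using $q^{2n}-1 \leq q^{2n}$, which after dividing gives
\[ \epsilon^2 \geq \frac{q^{2n-\lambda}-1}{q^{2n}}.\]
Second, in the nontrivial regime $q^{2n-\lambda} \geq 2$ (equivalently $\lambda \leq 2n - \log_q 2$) I would apply $q^{2n-\lambda}-1 \geq q^{2n-\lambda}/2$ to obtain
\[ \epsilon^2 \geq \frac{q^{2n-\lambda}}{2 q^{2n}} = \frac{q^{-\lambda}}{2}.\]
Rearranging yields $q^{\lambda} \geq 1/(2\epsilon^2)$, and taking logarithms base $q$ gives the claim.

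The only subtlety, and the only candidate obstacle, is the degenerate regime $\lambda > 2n - \log_q 2$, where the second estimate above is no longer useful. However, in this regime the target bound is already implied by trivialities: since $\epsilon \leq 1$ we have $2\log_q(1/\epsilon) - \log_q 2 \leq 2n - \log_q 2 < \lambda$, so a one-line remark settles this case. Thus there is no real obstacle, and the corollary follows by purely elementary manipulations of the bound from Theorem~\ref{thm:lowerbound}.
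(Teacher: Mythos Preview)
Your argument is correct and follows the same elementary route as the paper: both simply loosen the inequality of Theorem~\ref{thm:lowerbound} with one or two crude estimates (the paper first extracts $\epsilon \geq q^{-n}$ from Theorem~\ref{thm:lowerbound} using $\lambda \leq n$ and then replaces the additive $1$ by $q^{2n}\epsilon^2$; you instead bound $q^{2n-\lambda}-1 \geq \tfrac{1}{2}\,q^{2n-\lambda}$ directly).

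One slip, though: in your ``degenerate regime'' you claim that $\epsilon \leq 1$ yields $2\log_q(1/\epsilon) \leq 2n$, but $\epsilon \leq 1$ only gives $\log_q(1/\epsilon) \geq 0$, not an upper bound. Fortunately this regime is vacuous anyway: the definition of an $(n,n-\lambda,\epsilon)_q$-PMD code forces $0 \leq \lambda \leq n$, so $2n-\lambda \geq n \geq 1$ and hence $q^{2n-\lambda} \geq q \geq 2$ always holds. You can therefore drop the case split entirely (or, if you prefer to keep it, replace the faulty justification by the bound $\epsilon \geq q^{-n}$, which is exactly what the paper derives and uses).
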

\begin{proof}
Theorem~\ref{thm:lowerbound} implies that
\begin{align*}
  q^\lambda & \geq \frac{q^{2n}}{(q^{2n}-1)\epsilon^2 + 1}
   = \frac{1}{\epsilon^2} - \frac{1/\epsilon^{2} - 1}{(q^{2n}-1)\epsilon^2 +1}
   = \frac{1}{\epsilon^2}\left( 1 - \frac{1/\epsilon^2-1}{q^{2n}+1/\epsilon^2-1}\right)\\
\end{align*}
By taking logarithms, 
\begin{align*} 
  \lambda  \geq 2\log_q\left(\frac{1}{\epsilon}\right) + \log_q\left( 1 - \frac{1/\epsilon^2-1}{q^{2n}+1/\epsilon^2-1}\right)
   \geq 2\log_q\left(\frac{1}{\epsilon}\right) - \frac{1-\epsilon^2}{\epsilon^2 q^{2n}\ln q},
\end{align*}
where the last inequality follows from the inequality $\log_q(1-x) \geq -\frac{x}{(1-x)\ln q}$ for $0 < x < 1$.
\end{proof}
Corollary~\ref{cor:lowerbound} immediately implies that every PMD code of length $n$, rate $R$, and error parameter $\epsilon$
must satisfy \[R \leq 1 - \frac{2 \log_q(1/\epsilon)}{n} + \frac{1 -\epsilon^2}{n\epsilon^2 q^{2n}\ln q},\]
which implies that, for any constant $\epsilon \in (0,1)$, 
\[ R \leq 1 - \frac{2}{n}\log_q\left(\frac{1}{\epsilon}\right) + o(1).\]

Bergamaschi~\cite{Ber24} presented a construction of an $(n+\ell, n - \ell, \epsilon)_q$-PMD code
based on the purity testing codes by~\cite{BCGST02}
for every prime $q$ and sufficiently large $n, \ell \in \N$,
where $\epsilon \leq \sqrt{(2n+1)q^{-\ell}}$.
The redundancy parameter $\lambda$ is equal to $2\ell$.
Plugging $\epsilon = \sqrt{(2n+1)q^{-\ell}}$ into the bound in
Corollary~\ref{cor:lowerbound}, we have
\[ \lambda \geq 2\log_q\sqrt{\frac{q^\ell}{2n+1}} - \frac{1 - (2n+1)q^{-\ell}}{(2n+1)q^{-\ell}q^{2n}\ln q} = \ell - \log_q(2n+1) - O(q^{\ell - 2n}). \]
Hence, there is a gap of $\ell + O(\log_q n)$ between the construction of~\cite{Ber24} and our lower bound.
Closing this gap appears to require techniques beyond the unitary 1-design
arguments used in our proof or improvements in purity-testing-based constructions.
For AMD codes, the classical counterparts of PMD codes,
the redundancy (or tag length) lower bound $2\log(1/\epsilon) - O(1)$ is known to be tight~\cite{CDFPW08,CFP13}.
The similarity between this classical bound and Corollary~\ref{cor:lowerbound}
suggests that optimal PMD code constructions may also exist.
Determining whether our lower bound is tight remains open.

\section*{Acknowledgments}
 This work was supported in part by JSPS KAKENHI Grant Numbers 
23H00468 and 24H00071.

\bibliographystyle{alpha}
\bibliography{mybib}

\end{document}